\documentclass[sigconf]{acmart}

\makeatletter
\def\@ACM@checkaffil{
    \if@ACM@instpresent\else
    \ClassWarningNoLine{\@classname}{No institution present for an affiliation}%
    \fi
    \if@ACM@citypresent\else
    \ClassWarningNoLine{\@classname}{No city present for an affiliation}%
    \fi
    \if@ACM@countrypresent\else
        \ClassWarningNoLine{\@classname}{No country present for an affiliation}%
    \fi
}
\makeatother

\AtBeginDocument{%
  \providecommand\BibTeX{{%
    \normalfont B\kern-0.5em{\scshape i\kern-0.25em b}\kern-0.8em\TeX}}}

\copyrightyear{2023}
\acmYear{2023}
\setcopyright{rightsretained}
\acmConference[SIGIR '23]{Proceedings of the 46th International ACM SIGIR Conference on Research and Development in Information Retrieval}{July 23--27, 2023}{Taipei, Taiwan}
\acmBooktitle{Proceedings of the 46th International ACM SIGIR Conference on Research and Development in Information Retrieval (SIGIR '23), July 23--27, 2023, Taipei, Taiwan}
\acmPrice{}
\acmDOI{10.1145/3539618.3591947}
\acmISBN{978-1-4503-9408-6/23/07}

\settopmatter{printacmref=true}

\begin{document}

\title[Popularity Debiasing from Exposure to Interaction in Collaborative Filtering]{Popularity Debiasing from Exposure to Interaction in Collaborative Filtering}

\author{Yuanhao Liu}
\affiliation{%
  \institution{Data Intelligence System Research Center, Institute of Computing
Technology, CAS}
}
\affiliation{%
  \institution{University of Chinese Academy of Sciences, Beijing, China}
}
\email{liuyuanhao20z@ict.ac.cn}

\author{Qi Cao}
\authornote{Corresponding authors.}
\affiliation{%
  \institution{Data Intelligence System Research Center, Institute of Computing
Technology, CAS}
}
\email{caoqi@ict.ac.cn}

\author{Huawei Shen}
\authornotemark[1]
\affiliation{%
  \institution{Data Intelligence System Research Center, Institute of Computing
Technology, CAS}
}
\affiliation{%
  \institution{University of Chinese Academy of Sciences, Beijing, China}
}
\email{shenhuawei@ict.ac.cn}

\author{Yunfan Wu}
\affiliation{%
  \institution{Data Intelligence System Research Center, Institute of Computing
Technology, CAS}
}
\affiliation{%
  \institution{University of Chinese Academy of Sciences, Beijing, China}
}
\email{wuyunfan19b@ict.ac.cn}

\author{Shuchang Tao}
\affiliation{%
  \institution{Data Intelligence System Research Center, Institute of Computing
Technology, CAS}
}
\affiliation{%
  \institution{University of Chinese Academy of Sciences, Beijing, China}
}
\email{taoshuchang18z@ict.ac.cn}

\author{Xueqi Cheng}
\affiliation{%
  \institution{CAS Key Lab of Network Data Science and Technology, Institute of Computing Technology, CAS}
}
\affiliation{%
  \institution{University of Chinese Academy of Sciences, Beijing, China}
}
\email{cxq@ict.ac.cn}
\begin{abstract}
Recommender systems often suffer from popularity bias, where popular items are overly recommended while sacrificing unpopular items. Existing researches generally focus on ensuring the number of recommendations (\emph{exposure}) of each item is equal or proportional, using inverse propensity weighting, causal intervention, or adversarial training.
However, increasing the exposure of unpopular items may not bring more clicks or interactions, resulting in skewed benefits and failing in achieving real reasonable popularity debiasing. 
In this paper, we propose a new criterion for popularity debiasing, i.e., \emph{in an unbiased recommender system, both popular and unpopular items should receive \underline{I}nteractions \underline{P}roportional to the number of users who \underline{L}ike it, namely IPL criterion}. Under the guidance of the criterion, we then propose a debiasing framework with IPL regularization term which is theoretically shown to achieve a win-win situation of both popularity debiasing and recommendation performance. Experiments conducted on four public datasets demonstrate that when equipping two representative collaborative filtering models with our framework, the popularity bias is effectively alleviated while maintaining the recommendation performance.
\end{abstract}

\begin{CCSXML}
<ccs2012>
   <concept>
       <concept_id>10003120.10003130.10003131.10003269</concept_id>
       <concept_desc>Human-centered computing~Collaborative filtering</concept_desc>
       <concept_significance>500</concept_significance>
       </concept>
 </ccs2012>
\end{CCSXML}

\ccsdesc[500]{Human-centered computing~Collaborative filtering}

\keywords{collaborative filtering, popularity debiasing, new criterion}

\maketitle

\section{Introduction}
Recommender systems aim to help users find interesting items and are used in various online services, e.g., online shopping \cite{wu2019neural, kim2002personalized, kim2009personalized}, social networks \cite{debnath2008feature, walter2008model, wang2012joint}, and video sites \cite{davidson2010youtube, deldjoo2016content, mei2011contextual}.
Collaborative filtering, a widely used recommendation approach~\cite{koren2009matrix, he2020lightgcn, wu2016collaborative, huang2023adversarial}, learns models for recommendation from user-item interactions.
Unfortunately, since the user-item interactions usually exhibit long-tail distribution in terms of item popularity~\cite{zhang2021model, li2019both}, the models trained via the skewed interactions are often observed to further amplify the popularity bias by over-recommending popular items and under-recommending unpopular items~\cite{mansoury2020feedback, abdollahpouri2019unfairness}. Such popularity bias may lead to terrible results such as Matthew Effect~\cite{perc2014matthew}, hindering the long-term and sustainable development of recommender systems.

Researchers propose different methods for popularity debiasing.
Early works generally follow a paradigm of inverse propensity weighting~\cite{kamishima2014correcting, schnabel2016recommendations, liang2016causal, brynjolfsson2006niches, krishnan2018adversarial}, i.e., increase the weight of unpopular items during training.
Later, researchers introduce causal methods to remove the harmful impact of popularity~\cite{zhang2021causal, zheng2021disentangling, wei2021model, zhao2021popularity}. Adversarial learning is also used to enhance the distribution similarity of prediction scores between items to alleviate popularity bias~\cite{zhu2020measuring, liu2022mitigating}.

\begin{figure}
    \centering
    \includegraphics[width=7.3cm]{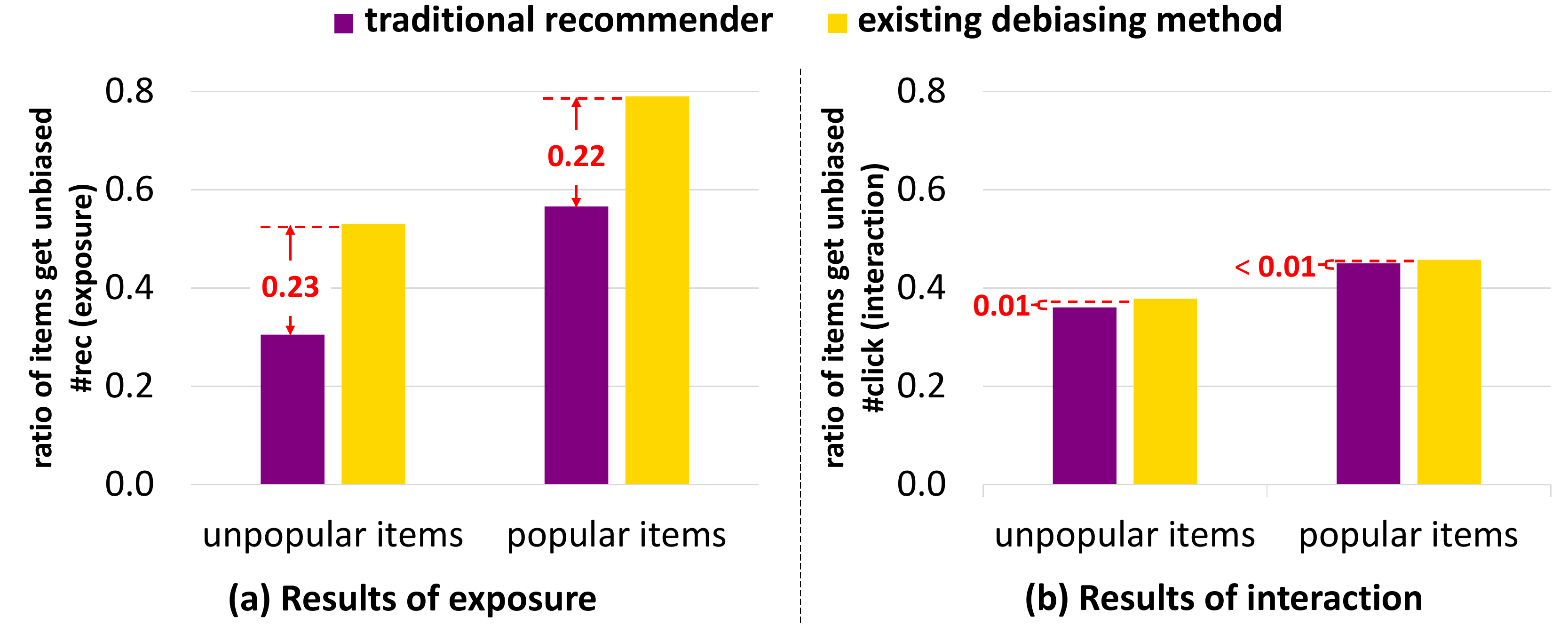}
    \setlength{\abovecaptionskip}{0.2cm}
    \caption{The proportion of items that get the unbiased (a) \textit{exposure} (b) \textit{interaction} amount before and after debiasing. The unbiased number of exposures/interactions of the item is proportional to the item's popularity. Popular items are the ones whose popularity exceeds 80\% items.}
    \label{fig:fig1}
    \vspace{-5pt}
\end{figure}

Despite the various existing methods, they generally focus on making the number of recommendations (the \emph{exposure}) of items equal to each other or proportional to their popularity~\cite{wei2021model, liang2016causal}. 
Fig.~\ref{fig:fig1}(a) shows the ratio of items that get unbiased exposure amount in existing representative debiasing method~\cite{wei2021model}, which helps 23\% unpopular items avoid being under-recommended. 
However, blindly increasing the exposure of unpopular items may not significantly increase their interactions
\footnote{User \emph{interactions}, such as clicking, purchasing, or expressing liking for the recommended item, can directly represent that the item benefits from the recommendation.}, since they may be recommended to users who dislike them. 
Taking two similar books as an example, one is recommended to users who like it and the other is recommended more to those who don't like it. 
Although they are treated similarly in exposure, the latter is likely to earn less because fewer users will buy it, and thus still be plagued by popularity bias. 
As shown in Fig.~\ref{fig:fig1}(b), there are hardly any extra unpopular items (less than 0.01) that can get an unbiased amount of interactions through existing debiasing methods.
In other words, simply increasing the exposure of unpopular items in existing methods may not necessarily lead to more clicks or interactions, resulting in skewed benefits and failing in achieving real reasonable popularity debiasing.

In this paper, we propose a new criterion for popularity debiasing, i.e., in a popularity unbiased recommender system, both popular and unpopular items should receive \emph{interactions} \emph{proportional} to the number of users who \emph{like} it, namely \textit{\textbf{IPL}} criterion. We propose a regularization-based debiasing framework following this criterion. However, in offline scenarios, we cannot observe the number of interactions that items can get in recommendation. In addition, due to the missing-not-at-random phenomenon of data, users who like the item are also invisible. In order to estimate both the number of interactions and users who like the items using observational data, we propose an inverse-propensity-score-based estimation method and use it to formalize the IPL regularization term for debiasing.

We conduct experiments on four public datasets and equip our framework with the representative matrix factorization~\cite{koren2009matrix} and the state-of-the-art LightGCN~\cite{he2020lightgcn} for collaborative filtering. Experimental results show that our framework achieves the best effectiveness of debiasing while even further improving the recommendation performance. Further analysis is also offered to deepen the understanding of the success of our framework.

In summary, the contribution of this work is threefold.

\begin{itemize}{\leftmargin=1em}
\setlength{\topmargin}{0pt}
\setlength{\itemsep}{0em}
\setlength{\parskip}{0pt}
\setlength{\parsep}{0pt}
    \item We propose a new criterion for popularity debiasing, i.e., an item should receive interactions proportional to the number of users who like it.
    \item We propose an estimation method to measure IPL in offline data, and further propose a debiasing framework, which is theoretically shown to achieve a win-win situation of both popularity debiasing and recommendation performance. 
    \item Experiments conducted on four public datasets demonstrate the effectiveness of our framework when equipped with two representative collaborative filtering models.
\end{itemize}
\vspace{-5pt}
\section{Method}
This section describes the problem formulation and the proposed new criterion for popularity debiasing, as well as a general debiasing framework.
\vspace{-5pt}

\subsection{Collaborative Filtering}
Suppose we have $N$ users $\mathcal{U}=\{u_1, u_2, ..., u_N\}$ and $M$ items $\mathcal{I}=\{i_1, i_2, ..., i_M\}$, collaborative filtering aims to predict the preference score $\widehat{y}_{u, i}$ of item $i$ by user $u$ based on observed interactions. For training a collaborative filtering model, Bayesian Personalized Ranking (BPR) \cite{rendle2009bpr} is one of the most widely adopted training loss,
\begin{equation*}
\small
\vspace{-3pt}
    L_{\text{BPR}}=-\sum_{u\in \mathcal{U}}\sum_{\substack{i^+\in \mathcal{I}_u^+}}\sum_{i^-\in \mathcal{I}/\mathcal{I}_u^+} \ln(\sigma(\widehat{y}_{u,i^+} - \widehat{y}_{u, i^-})+ \lambda_{\Theta}\lVert\Theta\rVert^2_F,
\vspace{-3pt}
\end{equation*}
where $\mathcal{I}_u^+$ is the set of items interacted with $u$ in training data.

\vspace{-5pt}
\subsection{New Criterion for Popularity Debiasing} \label{definition}
The new criterion is defined as a popular or unpopular item should receive interactions proportional to the number of users who like it, namely IPL criterion, which is formally defined as:
\begin{equation}\label{criterion}
\small
\vspace{-3pt}
\forall i, j \in \mathcal{I}^{\text{pop}} or \mathcal{I}^{\text{unpop}}, \frac{C_i}{Q_i}=\frac{C_j}{Q_j},
\end{equation}
where $\mathcal{I}^{pop}$ and $\mathcal{I}^{unpop}$ are popular and unpopular items respectively, $C_i$ is the number of interactions $i$ can get from recommendation, and $Q_i$ is the number of interactions that $i$ can get when exposed to all users, reflecting the number of users who like item $i$. Such a criterion takes into account two aspects:

\begin{itemize}{\leftmargin=1em}
\setlength{\topmargin}{0pt}
\setlength{\itemsep}{0em}
\setlength{\parskip}{0pt}
\setlength{\parsep}{0pt}
    \item The benefit of an item in recommendation is more about the interactions it receives, instead of massive useless recommendations. Hence, we take the interactions as the indicator.
    \item The more users like an item, the more interactions it deserves. Hence, an unbiased recommendation should ensure interactions is proportional to the users who like the item.
\end{itemize}

A recommender system is ideally unbiased if it satisfies the IPL criterion above, which not only provides a more reasonable goal for debiasing but also eliminates the trade-off between debiasing and recommendation performance (see Section 2.5).
\vspace{-5pt}
\subsection{Estimation in offline data}

The IPL criterion provides a great goal for popularity debiasing. However, due to the incomplete observation in offline data, it is hard to obtain the real number of interactions received from recommendations ($C_i$) and from all users ($Q_i$). Although some methods tried to consider interaction amount in debiasing~\cite{zhu2020measuring, zhu2021popularity}, they generally obtain the $C_i$ directly in the online scenario or ignore the incompleteness of the data, making it difficult to apply offline.

To solve the above problem, we propose an estimator for $C_i$ and $Q_i$ based on the inverse-propensity-score (IPS)~\cite{joachims2016counterfactual} technique.

For estimating $C_i$, note that $C_i = \sum_{u\in \mathcal{U}_i}R_{u, i}$, where $\mathcal{U}_i$ is users interact with $i$ when $i$ is exposed to all users, and $R_{u, i}=1$ if $i$ is recommended to $u$ otherwise $0$. Given $C_i^{\text{IPS}} = \sum_{u\in \mathcal{U}_i}\frac{O_{u, i}}{P_{u,i}}R_{u,i}$ where $P_{u,i} = \text{Pr}(O_{u, i}=1)$ and $O_{u, i}$ is a binary variable of which $1$ means that an interaction between $u$ and $i$ is observed, we can easily have $\mathbb{E}_O\left[C_i^{\text{IPS}}\right] = C_i$. In other words, $C_i^{\text{IPS}}$ is an unbiased estimation of $C_i$. For estimating $P_{u, i}$ in $C_i^{\text{IPS}}$, we follow the broadly adopted approach in~\cite{yang2018unbiased}, which assume that $P_{u, i}=P_{*, i}=p^{\text{expose}}_{*, i}*p^{\text{like}}_{*, i}$, where $p^{\text{expose}}_{*, i}$ and $p^{\text{like}}_{*, i}$ are the probability that item $i$ is exposed and liked, and $p^{\text{like}}_{*, i}$ is estimated to be proportional to $Q_i=|\mathcal{U}_i|$. Then we have $P_{*, i}\propto p_{*,i}^{\text{expose}}*Q_i$. Denoting the observed popularity of item $i$ as $Q_i^*=|\mathcal{U}_i^*|$, we have a binomial distribution of $Q_i^*\sim \mathcal{B}(Q_i, p_{*,i}^{\text{expose}})$. Then we have $Q_i^*=p_{*,i}^{\text{expose}}*Q_i\propto P_{*, i}$, and
\begin{equation}
    \small
    \vspace{-5pt}
    C_i^{\text{IPS}}
    =\sum_{u\in \mathcal{U}_i}\frac{O_{u, i}}{P_{u,i}}R_{u,i}
    =\sum_{u\in \mathcal{U}_i^*}\frac{R_{u,i}}{P_{u,i}}
    =\frac{\sum_{u\in \mathcal{U}_i^*}R_{u,i}}{P_{*,i}}\propto\frac{C_i^*}{Q_i^*},
\end{equation}
where $C_i^*=\sum_{u\in \mathcal{U}_i^*}R_{u,i}$ is the number of observed interactions in recommendation.

To estimate $Q_i$, recall that $Q_i=Q_i^*/p_{*,i}^{\text{expose}}$. We assume that $p_{*,i}^{\text{expose}}\!\propto\!\left(Q_i^*\right)^\gamma$~\cite{yang2018unbiased}, where $\gamma$ characterizes the popularity bias during data generation~\cite{steck2011item}, and can be estimated by fitting the data. Then we have
\begin{equation}
    \small
    \vspace{-5pt}
    Q_i=\frac{Q_i^*}{p_{*, i}^{\text{expose}}}\propto{(Q_i^*)}^{1-\gamma}.
\end{equation}

With observed $C_i^*$ and $Q_i^*$ in offline data, we can measure the IPL criterion by $\frac{C_i^{\text{IPS}}}{Q_i} \propto \frac{C_i^*}{\left(Q_i^*\right)^{2-\gamma}} = r_i$. Then we can rewrite Eq. (\ref{criterion}) as:
\begin{equation}\label{measuring}
    \small
    \vspace{-5pt}
    \forall i, j \in \mathcal{I}^{\text{pop}} or \mathcal{I}^{\text{unpop}}, r_i = r_j.
    \vspace{-5pt}
\end{equation}
\vspace{-5pt}
\subsection{Debiasing Framework}
We further propose a debiasing framework with a regularization term considering the IPL criterion during model training. 

Considering a recommender model with parameter $\Theta$, we have
\begin{equation*}
\small
    \begin{aligned}
    \widehat{r}_i
    =\mathbb{E}[r_i|\Theta]
    =\frac{\mathbb{E}\left[\left.C_i^*\right| \Theta \right]}{\left(Q_i^*\right)^{2-\gamma}} 
    =\frac{\sum_{u \in \mathcal{U}^*_i}\text{Pr}\left(\left.u,i\right|\Theta\right)}{\left|\mathcal{U}^*_i\right|^{2-\gamma}}
    =\frac{\sum_{u \in \mathcal{U}^*_i}\sigma(\widehat{y}_{u,i})}{\left|\mathcal{U}^*_i\right|^{2-\gamma}},
    \end{aligned}
\end{equation*}
where $\sigma(\widehat{y}_{u,i})$ indicates the probability that $i$ will be recommended to $u$, $\mathcal{U}_i^*$ is all the users observed interact with $i$ in training data.

Then the regularization term ensures IPL criterion is written as:
\begin{equation}
\small
\vspace{-3pt}
L_{\text{IPL}}
=std_{i\in\mathcal{I}}(\widehat{r}_i)
=\sqrt{\frac{1}{M}\sum_{i \in \mathcal{I}}(\widehat{r}_i - \frac{1}{M}\sum_{i\in \mathcal{I}}\widehat{r}_i)^2},
\end{equation}
which aims to reduce the $r_i$ gap between all the $M$ items according to Eq. (\ref{measuring}) to achieve unbiased recommendation under the IPL criterion.

The final loss function for debiasing model training is:
\begin{equation}
\small
    \min_\Theta L_{\text{debias}}=L_{\text{BPR}} + \lambda_f*L_{\text{IPL}},
    \label{fairloss}
\end{equation}
where $\lambda_f$ controls the strength of popularity debiasing.
\vspace{-5pt}
 \subsection{Theoretical Analysis}\label{theoretical}

Generally, popularity debiasing may bring a decrease in recommendation performance. In this section, we give a theoretical analysis to show that our proposed criterion can achieve a win-win situation for both recommendation performance and popularity debiasing. 

\begin{proposition}\label{proposition}
Given an acceptable recommendation performance with $\text{recall}=c$, there always exists a recommendation result that satisfies the requirements of both the recall $c$ and the IPL criterion.
\end{proposition}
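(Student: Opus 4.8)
The plan is to prove the claim constructively: rather than starting from a given recall-$c$ solution and repairing it, I will directly exhibit a recommendation whose induced interaction counts meet both conditions at once, which makes transparent that the IPL criterion imposes no sacrifice on recall. The first step is to put both requirements on the same footing in terms of the item-level quantities $C_i$ and $Q_i$. Since an interaction for item $i$ is exactly a recommendation of $i$ to a user who likes it, the total number of successful recommendations is $\sum_{i\in\mathcal{I}} C_i$ and the total number of relevant (user, item) pairs is $\sum_{i\in\mathcal{I}} Q_i$, so the natural formalization is $\text{recall}=\frac{\sum_{i} C_i}{\sum_{i} Q_i}$. By contrast, the IPL criterion in Eq.~(\ref{criterion}) only fixes the per-item ratios $C_i/Q_i$ to be constant within $\mathcal{I}^{\text{pop}}$ and within $\mathcal{I}^{\text{unpop}}$; it says nothing about the overall magnitude that recall measures. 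The two requirements therefore act on genuinely different degrees of freedom.

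The second step is the construction. I would prescribe, for every item $i$, the target $C_i = c\,Q_i$, realized by exposing $i$ to a $c$-fraction of its liking users $\mathcal{U}_i$; equivalently, in the probabilistic view of the framework where $\sigma(\widehat{y}_{u,i})$ denotes the probability that $i$ is shown to $u$, I take a recommendation result in which this probability equals $c$ for every pair $(u,i)$ with $u\in\mathcal{U}_i$, so that $\mathbb{E}[C_i]=c\,Q_i$. Verification is then immediate: every ratio equals $C_i/Q_i=c$, so the IPL criterion in Eq.~(\ref{criterion}) holds (in fact the ratios coincide across both groups, a strictly stronger condition), while summation gives $\text{recall}=\frac{\sum_i c\,Q_i}{\sum_i Q_i}=c$. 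Because recall is a fraction we have $c\in[0,1]$, so the prescribed probabilities are admissible and the implied counts respect the hard cap $C_i\le Q_i=|\mathcal{U}_i|$.

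The step I expect to require the most care is feasibility, i.e., confirming that an arbitrary prescribed interaction profile is actually realizable by an admissible recommendation and, if one insists on integer counts, handling the rounding of $c\,Q_i$. In the idealized setting adopted here this is unobstructed, since the counts $C_i$ are assigned independently across items and each target lies in $[0,Q_i]$; I would spell out that choosing any subset of $\mathcal{U}_i$ of the prescribed size suffices. To reinforce that the claim is not a knife-edge, I would finally note that $\alpha=\beta=c$ is only one point of a whole family: any group-constants $\alpha,\beta\in[0,1]$ satisfying $\alpha\sum_{i\in\mathcal{I}^{\text{pop}}}Q_i+\beta\sum_{j\in\mathcal{I}^{\text{unpop}}}Q_j=c\sum_{i\in\mathcal{I}}Q_i$ yield an IPL-satisfying recommendation at recall exactly $c$, so the IPL constraint leaves a continuum of slack and the existence asserted by the proposition is robust.
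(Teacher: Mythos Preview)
Your construction is clean, but it sidesteps the very constraint that makes the proposition non-trivial in the paper's setting. The paper works in a top-$k$ regime: each user is recommended exactly $k$ items (this is introduced explicitly at the start of the paper's proof, and it is why the evaluation uses Precision@$k$, Recall@$k$, NDCG@$k$). Your prescription ``for each item $i$, expose it to a $c$-fraction of $\mathcal{U}_i$'' assigns the $C_i$'s independently across items, but then the number of items shown to a fixed user $u$ is uncontrolled: it would be roughly a $\text{Binomial}(|\mathcal{I}_u|,c)$ variable rather than the required constant $k$. Once you impose $|\{i:R_{u,i}=1\}|=k$ for every $u$, the item-level counts $C_i$ become coupled through the shared users, and the existence of a feasible assignment with all $C_i/Q_i=c$ is no longer immediate; it can genuinely fail when many users like more than $k$ items concentrated on a common set. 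Your ``feasibility'' paragraph explicitly assumes independence across items, which is exactly what the top-$k$ budget rules out.

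The paper's argument is built around precisely this obstruction. It isolates a necessary condition (\emph{condition-1}: some user likes more than $k$ items from a certain ``at-risk'' set $\mathcal{I}^-$) for the top-$k$ constraint to force a discriminated item, then bounds the probability of that condition under a Pareto model for $|\mathcal{I}_u|$ via a Chernoff inequality, and finally checks on real datasets that this probability is below $10^{-10}$ even at $c=0.99$. So the paper's ``proof'' is a probabilistic/empirical argument that the bad event almost never occurs, not a deterministic existence statement. Your proposal would be a valid and much simpler proof in a model without a per-user budget; to match the paper you would need to confront the top-$k$ coupling directly, either by showing the target vector $(cQ_i)_{i\in\mathcal{I}}$ is realizable by a $k$-regular bipartite selection (a Hall-type or degree-sequence argument), or by adopting the paper's probabilistic route.
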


\begin{proof}
For recommendations with recall $c = \frac{\sum_{i\in \mathcal{I}}C_i}{\sum_{i\in \mathcal{I}}Q_i}$, to meet the IPL criterion, we should have $\forall i \in \mathcal{I}, r_i=\frac{C_i}{Q_i}=\frac{\sum_{i\in \mathcal{I}}C_i}{\sum_{i\in \mathcal{I}}Q_i}=c$.

Given $k \in \{1, 2, ..., M\}$ denoting the number of items recommended to each user, if a recommendation with recall $c$ cannot meet the IPL criterion, then there always exists a discriminated item $i$ that has $r_i=\frac{C_i}{Q_i}=\frac{C_i}{|\mathcal{U}_i|}<c$. In other words, there are at least $(1-c)|\mathcal{U}_i|$ users like item $i$ while cannot further have $i$ in their recommendation list, which means that such a user $u$ will necessarily lose the recommendation of another item $j\in\mathcal{I}_u$ ($u$ likes $j$) and make $r_j<c$ if adding item $i$ into the recommendation list of user $u$. For better understanding, we denote the set of items at risk of discrimination (like $j$) as $\mathcal{I}^-{\triangleq} \left\{j\left|\frac{|\mathcal{U}^- \cap \mathcal{U}_j|}{|U_j|} > 1-c\right. \right\}$, where $\mathcal{U}^-{\triangleq}\left\{u\big||\mathcal{I}_u|>k\right\}$. The above-mentioned user $u$ likes at least $k$ items in $\mathcal{I}^-$. Now we consider the existence of $u$, which is the necessary condition for the existence of the discriminated item $i$:
\begin{equation}
\small
 \textbf{condition-1}: \exists u \in \mathcal{U}, |\mathcal{I}_u \cap \mathcal{I}^-| > k.
 \vspace{-3px}
\end{equation}

Given $|\mathcal{I}_u|$ follows Pareto distribution $\text{Pr}(|\mathcal{I}_u|\!=\!x)\!=\!(\beta-1)x^{-\beta}$ with $\beta>1$, the probability that user $u\in\mathcal{U}^-$ equals to $p=\text{Pr}(|\mathcal{I}_u|\!>\!k)=k^{1-\beta}$. Therefore we have $\text{Pr}(|\mathcal{U}^- \!\cap\! \mathcal{U}_i|\!=\!x) \sim B(x; |\mathcal{U}_i|, p)$, where $B(\cdot;n, p)$ is the Binomial distribution with the number of experiments $n$ and the probability of success $p$. Then we have $\text{Pr}(i\in\mathcal{I}^-)=\text{Pr}\left(|\mathcal{U}^- \cap \mathcal{U}_i| \!>\! (1-c)|\mathcal{U}_i|\right)< F(c|\mathcal{U}_i|; |\mathcal{U}_i|, 1-p)$, where $F(\cdot;n, p)$ is the cumulative distribution function of $B(\cdot;n, p)$. With the Chernoff bound $F(x; n, p)\leq \exp\left(-nD(\frac{x}{n}||p)\right)$, where $D(a||p)=a\log \frac{a}{p} + (1-a)\log \frac{1-a}{1-p}$, we further have $\text{Pr}(i\in\mathcal{I}^-)\leq\exp\left(-\left((1-c)\log\frac{(1-c)}{p}+c\log\frac{c}{1-p}\right)\right)$.

Then the probability that satisfying \textit{condition-1} is:
\begin{equation}\small
\begin{aligned}
&1{-}\prod_{u\in\mathcal{U}\atop{|\mathcal{I}_u|>k}}\left(
        1 {-} \sum_{j=k+1}^{|\mathcal{I}_u|}\tbinom{|\mathcal{I}_u|}{j}\text{Pr}(i\in\mathcal{I}^-)^j(1-\text{Pr}(i\in\mathcal{I}^-))^{|\mathcal{I}_u| - j}
    \right)\\
    \leq &1{-}\prod_{u\in\mathcal{U}\atop{|\mathcal{I}_u|>k}}\left(
        1 {-}\sum_{j=k+1}^{|\mathcal{I}_u|}\tbinom{|\mathcal{I}_u|}{j}\text{Pr}(i\in\mathcal{I}^-)^j
    \right)\\
    \leq &1{-}\prod_{u\in\mathcal{U}\atop{|\mathcal{I}_u|>k}}\left(
        1 {-}\sum_{j=k+1}^{|\mathcal{I}_u|}\tbinom{|\mathcal{I}_u|}{j}\exp\left(-\left((1{-}c)\log\frac{(1{-}c)}{p}{+}c\log\frac{c}{1{-}p}\right)\right)^j
    \right)
\end{aligned}
\vspace{-3px}
\end{equation}

Taking real-world datasets as an illustration, when specifying interaction recall $c=0.99$, the probability of \textit{condition-1} $<10^{-10}$ on MovieLens-1M. Similar results were shown on other datasets. This show that there is almost always a recommendation result that satisfies IPL criterion while ensuring recommendation performance.
\end{proof}

\vspace{-5pt}
\section{Experimental results}

\begin{table*}[htbp]
    \setlength{\abovecaptionskip}{0cm}  
    \setlength{\belowcaptionskip}{-0.2cm} 
    \centering
    \caption{Performance of recommendation and debiasing. Bold and underlined scores are the best and second-best performances.}
    \resizebox{0.99\textwidth}{!}{
        \begin{tabular}{l|ccccc|ccccc|ccccc|ccccc}
            \toprule
            & \multicolumn{5}{c|}{MovieLens-1M}             & \multicolumn{5}{c|}{Gowalla}          & \multicolumn{5}{c|}{Yelp}             & \multicolumn{5}{c}{Amazon book} \\
            \cmidrule{2-21} & Pre$\uparrow$   & Recall$\uparrow$ & NDCG$\uparrow$  & MI$\downarrow$  & DI$\downarrow$ & Pre$\uparrow$   & Recall$\uparrow$ & NDCG$\uparrow$  & MI$\downarrow$   & DI$\downarrow$ & Pre$\uparrow$   & Recall$\uparrow$ & NDCG$\uparrow$  & MI$\downarrow$  &   DI$\downarrow$ & Pre$\uparrow$ & Recall$\uparrow$ & NDCG$\uparrow$  & MI$\downarrow$  & DI$\downarrow$ \\
            \midrule
            MF    & \underline{19.977 } & \underline{27.685 } & \underline{30.777 } & 2.138  & 3.723  & \underline{4.738 } & \underline{17.763 } & \underline{13.149 } & 0.601  & 3.067  & \underline{2.557 } & \underline{10.871 } & \underline{6.848 } & 0.782  & 3.989  & \underline{2.618 } & \underline{12.066 } & \underline{7.979 } & 0.757  & 3.061  \\
            PO    & 17.329  & 23.097  & 25.877  & 1.997  & 3.461  & 4.442  & 16.862  & 12.374  & 0.593  & 2.918  & 1.951  & 8.075  & 5.078  & 0.742  & 4.594  & 2.606  & 12.021  & 7.965  & 0.742  & \underline{2.928 } \\
            DPR   & 16.630  & 23.010  & 25.240  & \underline{1.985 } & 3.467  & 3.983  & 14.404  & 10.862  & 0.611  & 4.042  & 2.104  & 8.485  & 5.407  & 0.766  & 5.757  & 2.238  & 10.219  & 7.596  & 0.728  & 5.194  \\
            IPW   & 16.350  & 22.110  & 23.540  & 2.073  & 3.500  & 4.021  & 15.151  & 10.360  & 0.588  & 2.974  & 1.950  & 8.081  & 4.813  & \underline{0.718 } & \underline{3.882 } & 2.220  & 8.751  & 7.011  & 0.735  & 3.004  \\
            MACR  & 17.248  & 23.253  & 25.782  & 2.211  & \underline{3.458 } & 3.977  & 14.694  & 10.879  & \underline{0.584 } & \underline{2.905 } & 1.685  & 6.732  & 4.285  & 0.747  & 4.891  & 2.140  & 11.875  & 7.889  & \underline{0.726 } & 3.173  \\
            \textbf{IPL} & \textbf{20.397 } & \textbf{28.642 } & \textbf{31.397 } & \textbf{1.942 } & \textbf{3.254 } & \textbf{5.136 } & \textbf{19.220 } & \textbf{14.450 } & \textbf{0.559 } & \textbf{2.882 } & \textbf{2.715 } & \textbf{11.578 } & \textbf{7.278 } & \textbf{0.625 } & \textbf{3.808 } & \textbf{2.891 } & \textbf{13.274 } & \textbf{8.914 } & \textbf{0.725 } & \textbf{2.889 } \\
            \midrule
            LGCN  & 19.808  & \underline{27.217 } & \underline{30.599 } & 2.010  & 4.032  & \underline{5.381 } & \underline{19.894 } & \underline{15.355 } & 0.719  & \underline{2.623 } & \underline{2.934 } & \underline{12.410 } & \underline{7.948 } & 0.723  & \underline{4.403 } & \underline{2.905 } & \underline{12.978 } & \underline{8.805 } & 0.777  & \underline{3.016 } \\
            PO    & \underline{19.830 } & 27.206  & 30.475  & 1.998  & 3.356  & 5.221  & 18.373  & 15.176  & 0.692  & 3.131  & 2.861  & 12.220  & 7.767  & 0.738  & 4.487  & 2.880  & 11.675  & 7.992  & 0.720  & 3.031  \\
            DPR   & 17.682  & 22.861  & 28.199  & \underline{1.922 } & 3.417  & 4.533  & 16.654  & 12.458  & 0.638  & 2.760  & 2.106  & 11.255  & 6.152  & 0.728  & 6.809  & 2.422  & 9.427  & 6.854  & 0.736  & 4.334  \\
            IPW   & 13.689  & 16.700  & 19.590  & 1.959  & 3.777  & 2.950  & 10.960  & 8.340  & 0.685  & 3.357  & 1.913  & 9.159  & 5.822  & \underline{0.722 } & 6.026  & 2.075  & 8.427  & 6.999  & \underline{0.716 } & 4.922  \\
            MACR  & 18.724  & 26.251  & 27.945  & 1.985  & \underline{3.180 } & 5.183  & 19.022  & 14.831  & \underline{0.630 } & 3.058  & 2.335  & 11.684  & 7.702  & 0.725  & 4.651  & 2.108  & 10.133  & 7.255  & 0.757  & 3.799  \\
            \textbf{IPL} & \textbf{19.979 } & \textbf{27.545 } & \textbf{30.949 } & \textbf{1.921 } & \textbf{3.005 } & \textbf{5.385 } & \textbf{19.897 } & \textbf{15.365 } & \textbf{0.621 } & \textbf{2.011 } & \textbf{2.954 } & \textbf{12.440 } & \textbf{8.013 } & \textbf{0.720 } & \textbf{3.848 } & \textbf{2.913 } & \textbf{12.991 } & \textbf{8.844 } & \textbf{0.704 } & \textbf{2.886 } \\
            \bottomrule
        \end{tabular}
    }
    \label{tab:performance}
    \vspace{-5pt}
\end{table*}%

\subsection{Experiment Setup}
\textbf{Datasets.} We conduct experiments on four public benchmark datasets: \textbf{MovieLens-1M}~\cite{harper2015movielens}, \textbf{Gowalla}~\cite{cho2011friendship}, \textbf{Yelp}~\cite{yelpdataset}, and \textbf{Amazon Book}~\cite{he2016ups}.
We uniformly sample interaction records from each item, splitting them into 70\%, 10\%, and 20\% as train, validation, and test sets.
In addition, we follow~\cite{yang2018unbiased} and use MF~\cite{koren2009matrix} to estimate $\gamma$ of the datasets, which are $1.826$, $1.285$, $1.552$, $1.446$, respectively.

\noindent\textbf{Baselines.} We implement our method with MF \cite{koren2009matrix} and LightGCN \cite{he2020lightgcn} \footnote{The model implementations are accessible on Github: https://github.com/UnitDan/IPL}, comparing with the following baselines: 
\textbf{IPW}~\cite{liang2016causal} is a IPS-based re-weighting method.
\textbf{PO}~\cite{zhu2021popularity} eliminates popularity bias in item ranking by regularization.
\textbf{MACR} \cite{wei2021model} removes the causal effect of popularity on preference scores. 
\textbf{DPR}~\cite{zhu2020measuring} uses adversarial training for item group fairness.
In addition, we also optimize \textbf{MF} and \textbf{LightGCN} using BPR loss as the basis for comparison.

\noindent\textbf{Evaluation metrics.} We evaluate recommendation performance by \textit{Precision@k}, \textit{Recall@k} and \textit{NDCG@k}.  We set $k=20$ following \cite{wang2019neural}, and show all these metrics after multiplying by 100~\cite{wu2022inmo}. To evaluate the effectiveness of popularity debiasing, we use the mutual information (\textit{MI})~\cite{cover1999elements} of $r_i$ and item popularity $Q_i^*$ to measure the correlation between them. We also propose a metric that is consistent with the IPL criterion, called \textit{Deviation of Interaction distribution} (\textit{DI}). $DI = \frac{std_{i\in\mathcal{I}}(r_i)}{mean_{i\in\mathcal{I}}(r_i)}$.
Lower values of MI and DI indicate less popularity bias in the recommendation result.

\vspace{-5pt}
\subsection{Performance Evaluation}

To demonstrate the effectiveness of our method, we evaluate recommendation performance and debiasing results on four benchmark datasets. As shown in Table~\ref{tab:performance}, MF and LightGCN achieve good recommendation performance, but with a serious popularity bias, reflected in a high value of MI and DI. For debiasing baselines, they mitigate the popularity bias to some extent, however, at the cost of severely sacrificing recommendation performance. For our method, IPL shows superiority in both recommendation accuracy and debiasing. Especially, IPL achieves the best recommendation performance on all datasets, even outperforming MF and LightGCN. As for debiasing, our IPL performs best on MI and DI. These results demonstrate the effectiveness of our IPL method in terms of both recommendation performance and popularity debiasing.

\vspace{-5pt}

\subsection{IPL Method Win-Win Evaluation}

To verify the conclusion in Section~\ref{theoretical} that our proposed method can achieve a win-win situation in recommendation and debiasing, we compared the performance of IPL with different strengths of popularity debiasing and basic models. We used $1/DI$ to represent the debiasing performance and an unbiased recall metric ($Recall_{\text{SNIPS}}$)~\cite{yang2018unbiased} to represent the recommendation accuracy. We experimented with 20 equally spaced values of $\lambda_f$ on a logarithmic scale from $1e-2$ to $1e-6$. Fig.~\ref{fig:trade-off} demonstrates that the experimental results of IPL are consistently located in the upper right space of the basic model results, indicating that IPL can simultaneously enhance recommendation and debiasing performance.

\begin{figure}
    \centering
    \includegraphics[width=7.9cm]{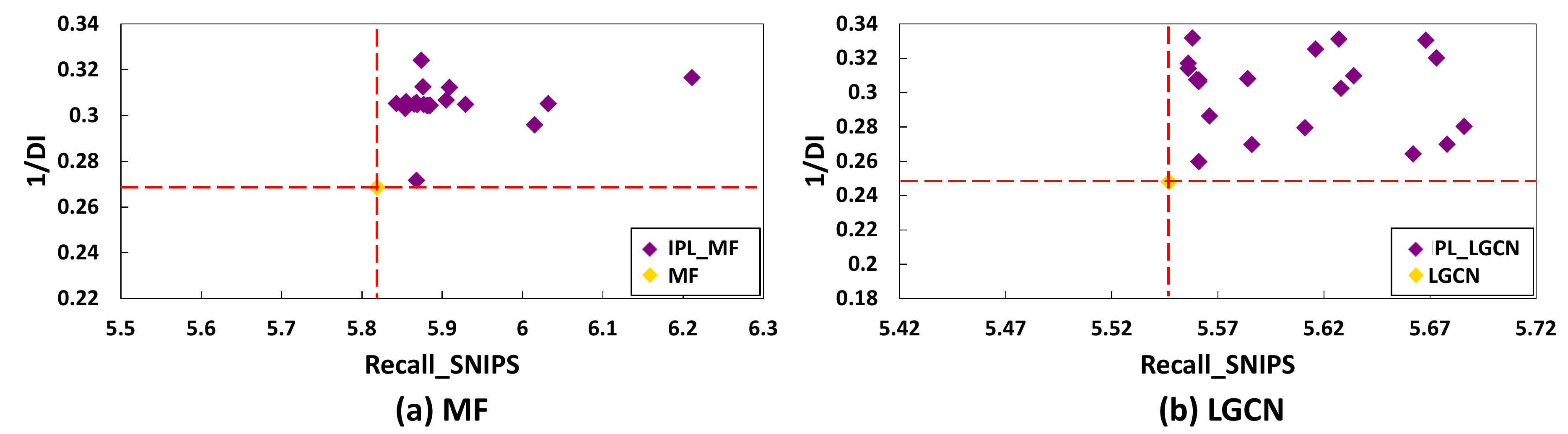}
    \setlength{\abovecaptionskip}{0.cm}
    \caption{Performance of basic models and IPL with different $\lambda_f$ on recommendation and debiasing.}
    \label{fig:trade-off}
    \vspace{-5pt}
\end{figure}

\vspace{-5pt}
\subsection{Case Study}
\begin{figure}
    \centering
    \includegraphics[width=7.5cm]{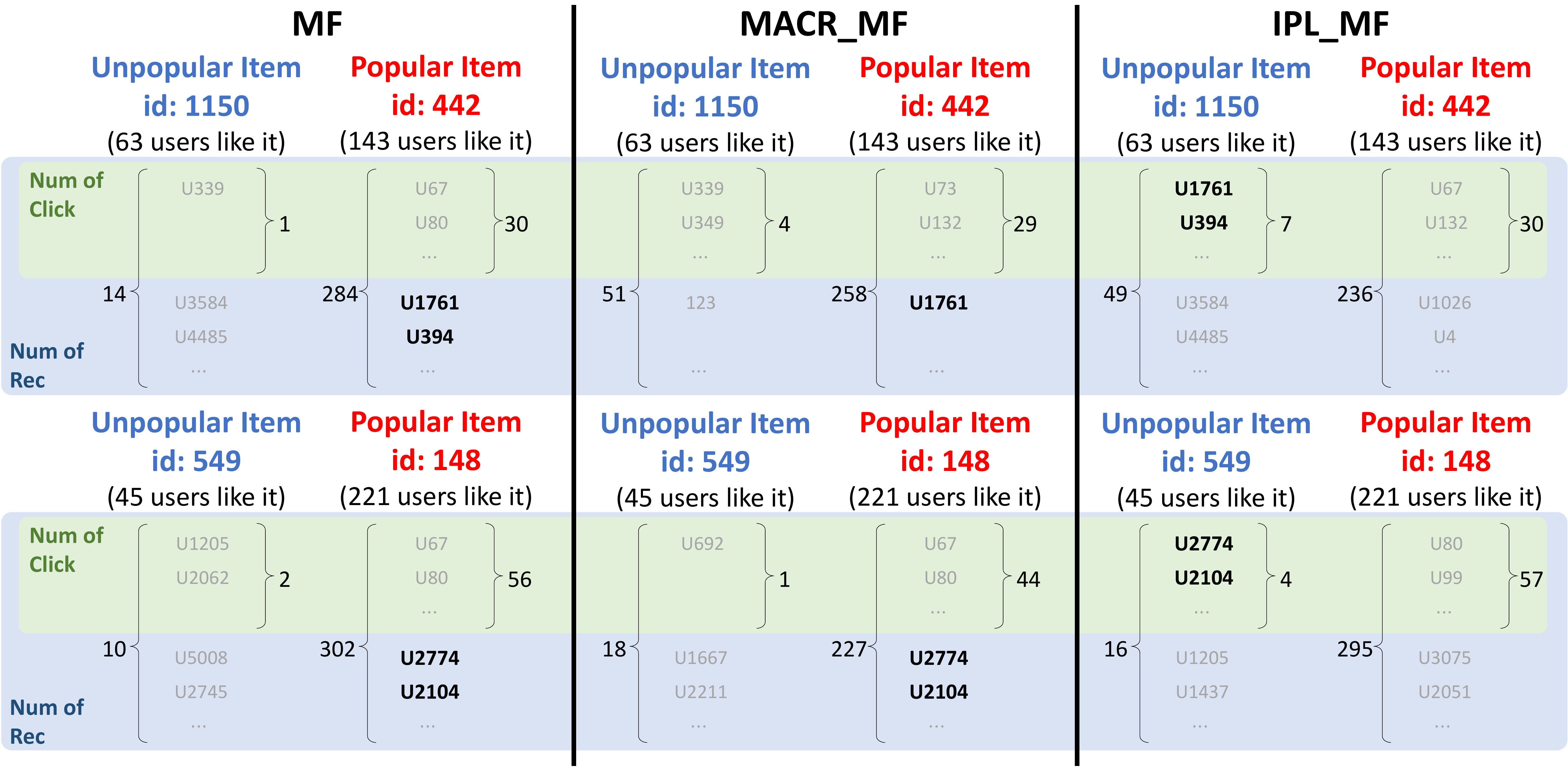}
    \setlength{\abovecaptionskip}{0.cm}
    \caption{Case study. The number of recommendations (blue) is on the left, and that of effective recommendations (green) is on the right.  Users who prefer unpopular items are bold.}
    \label{fig:case}
    \vspace{-8pt}
\end{figure}

Fig.~\ref{fig:case} demonstrates how IPL can achieve a win-win situation using Movie-Lens as an example. By comparing recommendation lists of two pairs of popular and unpopular items, we found that MF amplifies popularity bias by over-recommending popular items (442 and 148) and under-recommending unpopular ones (1150 and 549). The existing debiasing method (MACR\_MF) blindly recommends more to unpopular items, sacrificing recommendations of popular items. In contrast, our IPL\_MF recommends the unpopular item (1150) to users (U1761 and U394) who prefer unpopular items, thereby alleviating the popularity bias while achieving better recommendation accuracy.

\vspace{-5pt}
\section{CONCLUSION}

In this work, we propose a new criterion to measure popularity bias, i.e., an item should get interactions proportional to the number of users who like it (IPL). We further propose a debiasing framework based on IPL.  Both theoretical analysis and experimental results demonstrate that the proposed method can achieve a win-win situation for both recommendation boosting and popularity debiasing.

\begin{acks}
This work is funded by the National Key R\&D Program of China (2022YFB3103700, 2022YFB3103701), and the National Natural Science Foundation of China under Grant Nos. 62272125, U21B2046. Huawei Shen is also supported by Beijing Academy of Artificial Intelligence (BAAI).
\end{acks}

\bibliographystyle{ACM-Reference-Format}
\balance
\bibliography{ref}

\end{document}